\documentclass[10pt,a4paper]{article}

\usepackage[utf8]{inputenc}
\usepackage[T1]{fontenc}
\usepackage{amsmath,amssymb,amsthm}
\usepackage{booktabs}
\usepackage{geometry}
\usepackage{enumitem}
\usepackage{hyperref}
\usepackage[capitalize,noabbrev]{cleveref} 

\theoremstyle{plain}
\newtheorem{theorem}{Theorem}
\newtheorem{lemma}[theorem]{Lemma}

\theoremstyle{definition}
\newtheorem{definition}[theorem]{Definition}
\newtheorem{assumption}[theorem]{Assumption}

\theoremstyle{remark}
\newtheorem{remark}[theorem]{Remark}

\crefname{theorem}{Theorem}{Theorems}
\crefname{lemma}{Lemma}{Lemmas}
\crefname{corollary}{Corollary}{Corollaries}
\crefname{definition}{Definition}{Definitions}
\crefname{assumption}{Assumption}{Assumptions}
\crefname{remark}{Remark}{Remarks}
\crefname{section}{Section}{Sections}
\crefname{subsection}{Section}{Sections}
\Crefname{theorem}{Theorem}{Theorems}
\Crefname{lemma}{Lemma}{Lemmas}
\Crefname{corollary}{Corollary}{Corollaries}
\Crefname{definition}{Definition}{Definitions}
\Crefname{assumption}{Assumption}{Assumptions}
\Crefname{remark}{Remark}{Remarks}
\Crefname{section}{Section}{Sections}
\Crefname{subsection}{Section}{Sections}

\newcommand{\set}[1]{\mathcal{#1}}
\newcommand{\addr}{a}
\newcommand{\task}{T}
\newcommand{\stall}{\Delta t}
\newcommand{\interf}{I_{\task}}

\newcommand{\memlat}{L_{\text{mem}}}
\newcommand{\mapping}{S(\cdot)}
\newcommand{\setidx}{\sigma}
\newcommand{\tagf}{\tau}

\title{A Per-Access Upper Bound for Shared-Resource Interference \\ in Direct-Mapped Multicore Architectures}
\author{Felipe T. Pedroni\\ \textit{Founder, ExtenSilica} \\ \texttt{felipe.pedroni@extensilica.com}}
\date{\today}

\begin{document}

\maketitle

\begin{abstract}
We present a formal bounding analysis for maximum credible interference in multicore processors under strict architectural invariants: direct-mapped L2 cache (1-way associativity), disabled Miss Status Handling Registers (MSHRs), single-bank main memory, deterministic pinned tasks with fixed physical memory mapping, and a pessimistic L2/memory arbitration policy. We prove that, under these invariants, the per-critical-access stall imposed on a target task $\task$ is bounded above by $(N-1)\memlat$, and that this bound is attained by a synchronized adversarial workload of $N-1$ congruent-different-tag memory requests issued in phase with $\task$'s critical accesses. The argument is per-access and direct, requiring no informal multiplicative interference function. The derivation is purely analytical and discussed in the context of DO-178C/CAST-32A certification objectives for airborne software. Limitations and conditions for applicability are explicitly stated. This work provides a traceable method for separating multicore interference from Worst-Case Execution Time (WCET) budgets under fixed architectural constraints.
\end{abstract}

\noindent\textbf{Keywords:} Multicore interference, WCET analysis, DO-178C, CAST-32A, formal bounding, cache contention, real-time systems, RISC-V safety-critical extensions

\section{Introduction}
\label{sec:intro}

Multicore processors introduce shared-resource contention channels that complicate timing analysis for safety-critical systems. Certification standards such as DO-178C with CAST-32A guidance require demonstration of ``maximum credible interference'' for shared resources including caches, buses, and memory controllers~\cite{cast32a}. However, existing approaches often rely on empirical measurement or conservative pessimism without formal closure of the adversarial search space.

This work addresses the following question: \emph{Given a fixed architectural configuration and arbitration policy, can we formally bound the worst-case shared-resource interference on a target task and exhibit an adversarial workload that attains the bound?}

We answer affirmatively under a well-defined constraint space. Our contribution is threefold:
\begin{enumerate}[label=(\roman*)]
    \item A formal system model capturing hardware invariants, workload determinism, an arbitration policy, and configuration parameters (\cref{sec:formalization});
    \item Three lemmas establishing per-dimension upper bounds on spatial, temporal, and pattern-based interference (\cref{sec:lemmas});
    \item A main theorem proving a per-access stall bound of $(N-1)\memlat$, attained by a synchronized congruent-different-tag adversarial workload (\cref{sec:theorem}).
\end{enumerate}

The result may support certification evidence packages requiring traceable, analytically justified interference bounds, provided the target platform is independently characterized to satisfy the stated invariants. It also aligns with pre-silicon verification workflows for customizable ISAs (e.g., RISC-V extensions) where timing predictability must be established before tape-out.

\section{Related Work}
\label{sec:related}

Interference analysis in multicore real-time systems has been extensively studied. Pellizzoni et al.~\cite{pellizzoni2008} introduced a framework for bounding memory contention in COTS multicore platforms. Schranzhofer et al.~\cite{schranzhofer2010} developed timing analysis models for shared buses and caches. More recent work by Kelter et al.~\cite{kelter2015} and Bhattacharjee et al.~\cite{bhattacharjee2020} addresses cache partitioning and interference-aware scheduling.

However, most prior approaches either: (a) assume set-associative caches with replacement policies that complicate worst-case analysis; (b) rely on probabilistic or measurement-based bounds without formal closure; or (c) treat interference dimensions independently rather than proving joint maximality. Our work differs by establishing a \emph{direct per-access bound} under explicitly stated architectural invariants and a documented arbitration assumption, providing a formally closed adversarial search space for certification purposes.

The CAST-32A guidance~\cite{cast32a} emphasizes the need for ``credible'' worst-case scenarios but does not prescribe methods for proving maximality. This work fills that methodological gap for a well-defined architectural subclass.

\section{Formalization of the Search Space}
\label{sec:formalization}

We define the system as a closed tuple $S = \langle H, W, R, C \rangle$, separating the arbitration policy $R$ from the configurable adversarial parameters $C$:

\begin{definition}[Hardware Invariants $H$]
\label{def:hardware}
The hardware platform satisfies the following invariants:
\begin{itemize}
    \item $N$ processor cores: $C_0, C_1, \dots, C_{N-1}$
    \item Private L1 instruction and data caches per core
    \item Shared L2 cache: direct-mapped (1-way associative), no MSHRs
    \item Main memory: single-bank architecture, shared across all cores
    \item Fixed latency per L2 miss: $\memlat$ (includes dirty write-back + read + bus turnaround)
    \item Address decomposition: every physical address $\addr$ admits a decomposition into a set index $\setidx(\addr)$ and a tag $\tagf(\addr)$. The cache-set mapping function is $\mapping(\addr) := \setidx(\addr)$.
\end{itemize}
\end{definition}

\begin{definition}[Workload $W$]
\label{def:workload}
Target task $\task$ with deterministic address sequence per period:
\[
\mathcal{A}_{\task} = \langle \addr_1, \addr_2, \dots, \addr_k \rangle
\]
The \emph{critical access set} $\set{P}_{\text{crit}} \subseteq \mathcal{A}_{\task}$ comprises those accesses that lie on the data/control dependency chain determining $\task$'s WCET. All instances of $\task$ are pinned to cores, have identical period, worst-case execution time, and start synchronized at time $M$.
\end{definition}

\begin{definition}[Arbitration Model $R$]
\label{def:arbitration}
The L2/memory arbitration policy is a function that, at each scheduling decision, selects one core's pending request among the contending requests. We say the policy is \emph{pessimistic for $\task$} if, whenever $\task$'s request contends simultaneously with one or more adversarial requests, $\task$ is selected last; equivalently, $\task$ is served only after all contending adversarial requests have been completed.
\end{definition}

\begin{assumption}[Pessimistic Arbitration]
\label{ass:arb}
The arbitration policy $R$ is pessimistic for $\task$. This is a worst-case modeling assumption used throughout the proofs: under any policy that is not pessimistic for $\task$ (e.g., fixed priority favoring $\task$'s core, round-robin starting at $\task$, age-based with $\task$ arriving first), the stall on $\task$ is less than or equal to the bound established here. \cref{ass:arb} therefore yields a sound upper bound and, when the implemented arbiter is not characterized, is the appropriate certification-time abstraction.
\end{assumption}

\begin{definition}[Configuration $C$]
\label{def:configuration}
Parameters defining $N-1$ adversarial tasks:
\begin{itemize}
    \item Memory mapping: physical-to-cache-set function $\mapping$
    \item Start time offset relative to $\task$
    \item Access pattern (address sequence)
\end{itemize}
\end{definition}

\begin{definition}[Admissible Configuration]
\label{def:admissible}
A configuration $C$ is \emph{admissible} if it satisfies all of the following:
\begin{itemize}
    \item Each of the $N-1$ adversarial tasks is pinned to a distinct core and shares the same period and worst-case execution time as $\task$;
    \item Each adversarial core holds at most one in-flight L2 request at any instant (consistent with the no-MSHR / blocking miss-handling property of $H$);
    \item Each L2 miss incurs the fixed latency $\memlat$ defined in $H$;
    \item No data sharing exists between $\task$ and the adversaries, and no out-of-model interference channels (DMA, asynchronous interrupts, power management transitions, TLB pressure, coherence traffic) are active during the analysis window.
\end{itemize}
All bounds in this paper are stated over admissible configurations.
\end{definition}

\begin{definition}[Interference Metric]
\label{def:interference}
Total interference on $\task$'s execution time:
\[
\interf = \sum_{i=1}^{k} \stall(\addr_i)
\]
where $\stall(\addr_i)$ is the additional stall imposed on access $\addr_i$ due to contention on shared resources (L2 arbiter, memory bus, memory controller).
\end{definition}

\begin{assumption}[Architectural Invariants]
\label{ass:arch}
The hardware configuration $H$ is fixed and documented. No dynamic reconfiguration, power management, or thermal throttling occurs during the analysis window.
\end{assumption}

\begin{assumption}[Workload Determinism]
\label{ass:det}
Task $\task$ exhibits deterministic control flow and memory access patterns for a given input. No data-dependent branching affects the critical path under analysis.
\end{assumption}

\textbf{Objective}: Prove that, under $H$, $W$, $R$, and the assumptions above,
\[
\interf(C) \;\leq\; |\set{P}_{\text{crit}}| \cdot (N-1)\,\memlat \quad \text{for every admissible } C,
\]
and exhibit a Baseline configuration $C_{\text{base}}$ that attains this bound:
\[
C_{\text{base}} = \{\text{start\_sync} = M,\; \mapping_{\text{adv}} = \mapping_{\task},\; \mathcal{A}_{\text{adv}} = \mathcal{A}_{\task}\}.
\]
The bound is tight but not uniquely attained: any adversarial configuration that issues, synchronously with each $\addr_i \in \set{P}_{\text{crit}}$, $N-1$ congruent-different-tag requests reaches the same value (see \cref{rem:non-unique}).

\begin{remark}[Interpretation of $\mathcal{A}_{\text{adv}} = \mathcal{A}_{\task}$]
\label{rem:interpretation}
The equality $\mathcal{A}_{\text{adv}} = \mathcal{A}_{\task}$ is to be interpreted at the granularity of \emph{cache-set indices and access phases}, not as bit-identical physical addresses. Each adversarial task issues a request whose set index $\setidx(\addr')$ matches that of the corresponding $\task$ access, while tags are chosen to differ from $\tagf(\addr_i)$ and from one another (e.g., by allocating each adversary on a private physical page with the same cache-color as $\task$'s). This is the standard outcome of running $N-1$ copies of $\task$ on private memory with controlled physical placement, as discussed in \cref{sec:certification}. The address-level interpretation \emph{same physical address} is excluded: it would produce shared cache hits rather than evictions and is therefore not what the baseline denotes here.
\end{remark}

\section{Per-Dimension Lemmas}
\label{sec:lemmas}

\subsection{Lemma 1: Spatial Sufficiency (Direct-Mapped L2)}
\label{subsec:lemma1}

\begin{lemma}[Spatial Sufficiency for Miss]
\label{lem:spatial}
Consider an access $\addr_i \in \mathcal{A}_{\task}$ and an adversarial address $\addr'$. Then:
\begin{enumerate}
    \item If $\setidx(\addr') = \setidx(\addr_i)$ and $\tagf(\addr') \neq \tagf(\addr_i)$, the adversary's fill displaces $\addr_i$'s line. A single such adversary, issued before $\task$'s access to $\addr_i$, is sufficient to force $\task$'s access to miss in the L2.
    \item If $\setidx(\addr') \neq \setidx(\addr_i)$, the adversary does not touch the set holding $\addr_i$ and contributes no eviction pressure on $\addr_i$.
    \item If $\setidx(\addr') = \setidx(\addr_i)$ and $\tagf(\addr') = \tagf(\addr_i)$, the adversary requests the same line; after fill, both cores observe a hit on that line and no eviction of $\task$'s data is generated.
\end{enumerate}
\end{lemma}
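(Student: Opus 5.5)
The plan is to argue directly from the structure of a direct-mapped cache as fixed in \cref{def:hardware}. The single fact everything rests on is this: in a 1-way L2, each set index $\setidx$ selects exactly one line frame. That frame holds at most one tag at any instant. An access to $\addr$ hits if and only if the frame at $\mapping(\addr)=\setidx(\addr)$ currently holds $\tagf(\addr)$; otherwise it misses, and the fill overwrites the frame with $\tagf(\addr)$. I would state this as a preliminary observation and then treat the three items as three cases on the pair $(\setidx(\addr'),\tagf(\addr'))$ relative to $(\setidx(\addr_i),\tagf(\addr_i))$.

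\textbf{Case 1.} Take $\setidx(\addr')=\setidx(\addr_i)$ and $\tagf(\addr')\neq\tagf(\addr_i)$. Suppose the adversary's request is served before $\task$'s access to $\addr_i$. Two sub-cases arise.
\begin{itemize}
\item If the frame already holds $\tagf(\addr')$, the adversary hits, and the frame still does not hold $\tagf(\addr_i)$.
\item Otherwise the adversary misses, and its fill writes $\tagf(\addr')$ into the frame, displacing $\addr_i$'s line if it was present. This is the only frame that could hold $\addr_i$, since associativity is 1.
\end{itemize}
In both sub-cases, when $\task$ then accesses $\addr_i$, the frame holds $\tagf(\addr')\neq\tagf(\addr_i)$, so the access misses. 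Sufficiency of a single adversary then follows, because one request already fixes the frame's content. I must also rule out anything restoring $\tagf(\addr_i)$ in between. By \cref{def:admissible} there is no data sharing and no out-of-model channel, and I will read ``issued before'' as immediately preceding $\task$'s access in the arbiter's service order. Because there are no MSHRs, no merged or pending fill for $\addr_i$ can survive the eviction.

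\textbf{Case 2.} Take $\setidx(\addr')\neq\setidx(\addr_i)$. By the observation, the adversary's hit or fill modifies only the frame at $\setidx(\addr')$. The frame at $\setidx(\addr_i)$ is therefore untouched, so the adversary cannot change whether $\addr_i$ hits. I would note that this is a claim about \emph{eviction} pressure only; arbitration or bus contention is deferred to the later lemmas.

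\textbf{Case 3.} Take equal index and equal tag, so $\addr'$ and $\addr_i$ name the same line. After whichever request fills first, the frame holds $\tagf(\addr_i)$, and the later request from either core hits. No write of a different tag occurs, so no eviction of $\task$'s data results.

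The main obstacle is making ``before'' precise in Case 1 under concurrent requests, together with the shared-hit claim in Case 3 when the two requests are simultaneous. I would handle both using the blocking, no-MSHR property and the serialization imposed by the arbiter in \cref{def:arbitration}. Since requests are served one at a time to completion, ``before'' means earlier in service order, and simultaneous same-line requests are served sequentially. The second of these then hits, provided nothing in the system invalidates or evicts the line between the two services (for example, coherence traffic, which \cref{def:admissible} excludes).
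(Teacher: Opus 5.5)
Your proposal is correct and follows the paper's argument: both rest on the single-frame displacement rule of a 1-way L2. Under that rule a different-tag fill evicts the resident line, a same-tag access hits, and a different-index access leaves the frame untouched, and both arguments split into the same three cases. Your extra care makes explicit what the paper's one-line proof leaves implicit, namely the sub-case where the adversary itself hits and the reading of ``issued before'' as earlier in the arbiter's service order with no intervening refill of $\tagf(\addr_i)$.
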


\begin{proof}
A direct-mapped cache has set capacity $= 1$. The three cases follow directly from the displacement rule of the cache: a fill into a set with a different tag than the resident line evicts that line; a fill with the same tag is a hit; an access to a different set does not interact with the resident line of the set holding $\addr_i$.
\end{proof}

\begin{remark}[Eviction vs.\ Serialization Pressure]
\label{rem:eviction-vs-serialization}
\cref{lem:spatial} establishes that \emph{one} congruent-different-tag adversary is sufficient to guarantee that $\task$'s access to $\addr_i$ misses in the L2. Additional congruent-different-tag adversaries do not produce further evictions of $\task$'s line---once evicted, it cannot be evicted again until refilled---but they do contribute \emph{serialization pressure}: each such adversary issues its own L2 miss that must be serviced ahead of $\task$ under \cref{ass:arb} and the blocking miss model. The Temporal lemma (\cref{lem:temporal}) quantifies this serialization. The Theorem proof (\cref{thm:miub}) uses spatial sufficiency to guarantee a miss on $\task$'s access and serialization pressure to bound the stall.

A superset coverage $\set{S}_{\text{adv}} \supset \set{S}(\task)$ does not strengthen the spatial condition: an adversary issuing requests to sets unused by $\task$ contributes neither eviction pressure on $\set{S}(\task)$ nor serialization on $\task$'s pending miss.
\end{remark}

\subsection{Lemma 2: Temporal Bound (Serialization without MSHRs)}
\label{subsec:lemma2}

\begin{lemma}[Temporal Maximality]
\label{lem:temporal}
Given a set of $R$ conflicting requests, the serialization latency experienced by $\task$ is maximized when all requests arrive simultaneously at the L2 arbiter.
\end{lemma}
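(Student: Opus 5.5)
The approach is to model the L2/memory path as a single non-preemptive server and compare, for every arrival pattern, $\task$'s waiting time against the simultaneous-arrival case. Here $R$ denotes the number of conflicting adversarial requests, not the arbitration policy. I fix a critical access $\addr_i$ of $\task$ issued at time $t_0$. By \cref{lem:spatial} it is an L2 miss, and it is contended by $R$ adversarial requests, each an L2 miss of fixed latency $\memlat$. Because the L2 has no MSHRs and memory is single-bank, the service intervals of any two misses are disjoint: at most one miss is in service at a time. Each adversarial core holds at most one in-flight request (\cref{def:admissible}), so the $R$ requests come from distinct cores. I write $t_1,\dots,t_R$ for their arrival times at the arbiter. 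The serialization latency of $\task$ is $\stall = s_{\task} - t_0$, where $s_{\task}$ is the start of $\task$'s service.

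The first step is an upper bound valid for any arrival pattern. Only requests that are pending or in service at some instant in $[t_0, s_{\task})$ can delay $\task$, and each occupies the server for exactly $\memlat$. A request arriving after $t_0$ delays $\task$ only if the arbiter selects it while $\task$ is pending. A request arriving before $t_0$ can have consumed part of its $\memlat$ service before $t_0$. Either way, each of the $R$ requests contributes at most $\memlat$ of busy time in $[t_0, s_{\task})$. This interval is fully busy, since the arbiter is work-conserving and $\task$ is pending throughout. Hence $\stall \le R\,\memlat$.

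The second step shows that simultaneous arrival $t_1=\dots=t_R=t_0$ attains this bound. By \cref{ass:arb}, $\task$ is selected last among all simultaneously contending requests. The $R$ adversarial misses are then served back-to-back, each for $\memlat$, before $\task$ starts, giving $\stall = R\,\memlat$. Combined with the first step, this shows simultaneous arrival maximizes the serialization latency.

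The main obstacle is making the first step rigorous when arrivals are staggered. A request arriving strictly before $t_0$ may already have been partially or fully served, so its residual contribution is strictly less than $\memlat$. A request arriving strictly after $t_0$ is contending only if it arrives while $\task$ is still pending. Pessimistic arbitration as stated in \cref{def:arbitration} covers only simultaneous contention, so for late arrivals I would adopt the most adversarial reading: any request pending at a scheduling decision beats $\task$. I would then argue by induction on scheduling decisions that each decision before $s_{\task}$ consumes one distinct adversarial request's service. The count of such decisions is at most $R$, because each core contributes one request and there is blocking miss handling, and no decision can be charged more than $\memlat$. Cascaded busy periods preceding $t_0$ are handled the same way: they can only shorten the residual time charged inside $[t_0, s_{\task})$, never lengthen it beyond one $\memlat$ per request.
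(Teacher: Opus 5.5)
Your proposal is correct and follows essentially the same route as the paper. You charge each of the $R$ adversarial requests at most $\memlat$ of $\task$'s wait, whether it arrives before $\task$ (partially served residual) or after it (only if granted while $\task$ is pending); these are exactly the paper's negative- and positive-offset cases, summed to $R\,\memlat$. You then use \cref{ass:arb} to show the synchronous case attains the bound. Your busy-interval formulation makes explicit the work-conservation of the arbiter, which the paper's per-adversary summation uses only implicitly.

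One small caution concerns your inference ``the $R$ requests come from distinct cores'' and your count of at most $R$ scheduling decisions. Blocking miss handling limits each core to one request \emph{in flight at any instant}, not one request per waiting window, so a core could reissue while $\task$ is still pending. Rest both claims on the lemma's hypothesis of a fixed set of $R$ requests, not on the per-core argument.
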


\begin{proof}
With MSHRs disabled, the controller processes misses sequentially under \cref{ass:arb}. Let $t_{\task}$ denote $\task$'s arrival time at the arbiter, and let $t^{(1)}, \dots, t^{(R)}$ denote the adversarial arrival times, with $\delta_m = t^{(m)} - t_{\task}$.

\textbf{Synchronous case ($\delta_m = 0$ for all $m$).} All $R+1$ requests contend simultaneously. By \cref{ass:arb}, $\task$ is served last; its stall is $R \cdot \memlat$.

\textbf{Positive offset ($\delta_m > 0$).} Adversary $m$ arrives \emph{after} $\task$. By \cref{ass:arb}, $\task$ is served last only among requests \emph{currently} contending; if adversary $m$ has not yet arrived, it cannot delay $\task$. Its contribution is at most $\memlat$ (only if it arrives before $\task$ is granted) and possibly zero. This is bounded above by the synchronous contribution of $\memlat$.

\textbf{Negative offset ($\delta_m < 0$).} Adversary $m$ arrives \emph{before} $\task$. Under blocking miss handling and the no-MSHR assumption, each adversarial core can hold at most one in-flight request; hence the residual queue at $\task$'s arrival contains at most $R$ pending or in-service adversarial requests. If adversary $m$'s service has already completed by $t_{\task}$, its contribution to $\task$'s stall is zero. Otherwise, $\task$ inherits the residual queue and is served after it (by \cref{ass:arb}); the contribution of any single adversary remains bounded by $\memlat$. In particular, if $\delta_m \leq -m\,\memlat$, adversary $m$ completes before $\task$ contends and contributes zero.

In all three cases, each adversary's stall contribution under any non-synchronous schedule is bounded above by its synchronous contribution of $\memlat$. Summing over $R$ adversaries: $\stall(\task; \text{any schedule}) \leq R \cdot \memlat$, with equality attained in the synchronous case.
\end{proof}

\subsection{Lemma 3: Pattern Sufficiency (Critical-Path Alignment)}
\label{subsec:lemma3}

\begin{lemma}[Pattern Sufficiency]
\label{lem:pattern}
Suppose, for every critical access $\addr_i \in \set{P}_{\text{crit}}$, the adversarial schedule issues exactly $N-1$ congruent-different-tag requests (in the sense of \cref{lem:spatial}) synchronously with $\task$'s issue of $\addr_i$. Then, under \cref{ass:arb}, the per-access stall attains the bound of \cref{lem:temporal}:
\[
\stall(\addr_i) = (N-1)\,\memlat \quad \text{for every } \addr_i \in \set{P}_{\text{crit}},
\]
and consequently $\interf = |\set{P}_{\text{crit}}| \cdot (N-1)\,\memlat$.
\end{lemma}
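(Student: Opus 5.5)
The plan is to fix an arbitrary critical access $\addr_i \in \set{P}_{\text{crit}}$, prove $\stall(\addr_i) = (N-1)\memlat$ by matching lower and upper bounds, and then sum over $\set{P}_{\text{crit}}$ using \cref{def:interference}.

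\textbf{Step 1: the access misses and all adversaries miss.} By hypothesis, at the phase of $\addr_i$ each of the $N-1$ adversaries (one per distinct core, by \cref{def:admissible}) issues a request $\addr'_m$ with $\setidx(\addr'_m) = \setidx(\addr_i)$ and $\tagf(\addr'_m) \neq \tagf(\addr_i)$. The tags are also pairwise distinct, as in \cref{rem:interpretation}. By case~1 of \cref{lem:spatial}, at least one adversarial fill into the set precedes $\task$'s service, because pessimistic arbitration orders $\task$ last. That fill displaces $\task$'s line, so $\task$'s access is an L2 miss.

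Each adversarial request must be a miss as well. The set holds one line, and consecutive requests carry pairwise different tags. Each fill therefore evicts its predecessor, so no adversarial request hits. I will need to argue this for the resident line before the burst too. If the resident line were some adversary's tag from a prior phase, a hit there would only shorten service. I would handle this by requiring that each adversary's tag differ from what the set holds at that phase, which is the natural reading of ``congruent-different-tag'' within a synchronized phase. Alternatively, I would note that $\task$'s own previous access to the set left $\tagf(\addr_i)$ resident.

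\textbf{Step 2: the stall equals the synchronous value of \cref{lem:temporal}.} With MSHRs disabled, misses are serviced sequentially, each taking $\memlat$ (\cref{def:admissible}). Since all $N$ requests contend simultaneously, \cref{ass:arb} places $\task$ last. Its start of service is delayed by exactly $(N-1)\memlat$ relative to the uncontended case, in which its own miss latency is attributed to its baseline WCET and not to interference. This gives the synchronous case of \cref{lem:temporal} with $R = N-1$.

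For the matching upper bound, I will invoke \cref{lem:temporal} directly. Each core holds at most one in-flight request, so there are at most $N-1$ contending adversarial requests, each contributing at most $\memlat$. Thus equality holds.

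\textbf{Step 3: sum over critical accesses.} Summing over $\set{P}_{\text{crit}}$ gives $|\set{P}_{\text{crit}}|(N-1)\memlat$. Two points need care here. First, non-critical accesses must be taken to contribute no stall in the definition of $\interf$, or $\interf$ must be read as restricted to critical accesses. Second, the per-access stalls must not overlap or interact across consecutive critical accesses.

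I expect this last independence point to be the main obstacle. Residual adversarial service from phase $i$ could spill into phase $i+1$ and change the count. I would handle it with the determinism and synchronized-start clauses of \cref{def:workload} and \cref{ass:det}. Because the adversaries replay $\task$'s access phases, with identical WCET and period, they stall identically. After each burst they are therefore re-synchronized with $\task$'s next critical issue, so every phase begins with an empty queue and the same simultaneous-contention situation. Induction on $i$ over the critical accesses then closes the argument.
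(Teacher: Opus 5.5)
Your Steps 1--2 and the final summation follow the paper's proof: \cref{lem:spatial} makes every request in the burst miss, \cref{ass:arb} puts $\task$ last, the synchronous case of \cref{lem:temporal} with $R=N-1$ gives the per-access value, and the values are summed over $\set{P}_{\text{crit}}$. Your two caveats are points the paper passes over silently. The paper also reads $\interf$ as a sum over critical accesses only, and it simply asserts the ``different resident line'' condition. For the pre-burst resident line, keep your first option: build ``differs from the tag resident at that phase'' into the hypothesis. The alternative, that $\task$'s previous access left $\tagf(\addr_i)$ resident, does not hold in general. That access may have used another tag, or an adversary may have refilled the set between phases.

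The step that fails is your resolution of the independence issue in Step 3. The adversaries do \emph{not} stall identically. Under \cref{ass:arb} $\task$ is served last, so the adversary served $m$-th completes at $m\memlat$ after the burst begins, while $\task$ completes only at $N\memlat$. Identical code, WCET and period therefore do not re-synchronize them. Replayed copies drift ahead of $\task$ by up to $(N-1)\memlat$ per burst, so your inductive step is false. This same drift is why a synchronized start alone does not deliver this lemma's hypothesis for $C_{\text{base}}$; that concerns \cref{thm:miub}, not this lemma.

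You do not need the argument at all, because synchronous issue at every critical access is a \emph{hypothesis} of the lemma. What you actually need is that each burst starts with an empty queue, and that follows directly:
\begin{itemize}
\item Since $\task$ is served last, all $N-1$ adversarial requests of phase $i$ complete before $\task$'s access to $\addr_i$ does.
\item Hence they complete before $\task$ can issue its next critical access.
\item By the at-most-one-in-flight clause of \cref{def:admissible}, an adversary issuing a new request at that instant holds nothing else in flight.
\end{itemize}
So no residual service spills into the next phase, and the per-access equalities simply add. Replace the induction with this observation.
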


\begin{proof}
Fix any $\addr_i \in \set{P}_{\text{crit}}$. By hypothesis, $N-1$ adversarial requests arrive at the L2 simultaneously with $\task$'s request for $\addr_i$, each congruent with $\addr_i$ and carrying a distinct tag. By \cref{lem:spatial}, each adversarial request results in an L2 miss (its set has a different resident line). By \cref{ass:arb}, $\task$'s request is served last among the $N$ contending requests. By \cref{lem:temporal}, applied in the synchronous case with $R = N-1$, the stall on $\task$'s access to $\addr_i$ equals $(N-1)\,\memlat$. Summing the equality over $\set{P}_{\text{crit}}$ yields $\interf = |\set{P}_{\text{crit}}| \cdot (N-1)\,\memlat$.
\end{proof}

\begin{remark}[Sufficiency, not Maximality]
\label{rem:sufficiency}
\cref{lem:pattern} is deliberately phrased as an attainment (\emph{sufficiency}) statement, not a maximality statement. We do not claim here that every alternative adversarial pattern produces strictly lower interference. \cref{thm:miub} below establishes that no admissible configuration exceeds $|\set{P}_{\text{crit}}| \cdot (N-1)\,\memlat$; \cref{lem:pattern} shows that this bound is attained by the schedule above. Together these two statements suffice to conclude that the bound is tight, without requiring a strict-domination argument over arbitrary patterns. As noted in \cref{rem:non-unique}, multiple configurations attain the same bound.
\end{remark}

\section{Main Theorem: Per-Access Stall Bound}
\label{sec:theorem}

\begin{theorem}[Monotonic Interference Upper Bound (MIUB)]
\label{thm:miub}
Under hardware invariants $H$ (\cref{def:hardware}), workload $W$ (\cref{def:workload}), arbitration model $R$ (\cref{def:arbitration}), and \cref{ass:arch,ass:det,ass:arb}, every admissible adversarial configuration $C$ satisfies
\[
\interf(C) \;\leq\; |\set{P}_{\text{crit}}| \cdot (N-1)\,\memlat.
\]
The Baseline configuration $C_{\text{base}}$ attains this bound: $\interf(C_{\text{base}}) = |\set{P}_{\text{crit}}| \cdot (N-1)\,\memlat$.
\end{theorem}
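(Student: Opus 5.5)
The proof splits into an upper-bound part and an attainment part, both reduced to the per-access lemmas.

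For the upper bound, fix an admissible configuration $C$ and an access $\addr_i$, and bound $\stall(\addr_i)$ pointwise. Summing the per-access bounds over $\mathcal{A}_{\task}$ then controls $\interf(C)$ in \cref{def:interference}.

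\textbf{Step 1 (at most $N-1$ competitors).} By admissibility, there are $N-1$ adversarial cores and each holds at most one in-flight L2 request at any instant. So whenever $\task$'s request for $\addr_i$ is pending, at most $R \le N-1$ adversarial requests are pending or in service at the single-bank memory.

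\textbf{Step 2 (each competitor costs at most $\memlat$).} Under the fixed-latency, no-MSHR model, any request that delays $\task$ occupies the memory for at most $\memlat$. By \cref{lem:spatial}, requests to other sets or to the same line produce no eviction and no additional miss service that $\task$ must wait behind. \cref{lem:temporal} then gives $\stall(\addr_i) \le R\,\memlat \le (N-1)\memlat$ for every arrival schedule, with \cref{ass:arb} supplying the worst-case ordering.

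\textbf{Step 3 (restrict to critical accesses).} This is the main obstacle. The objective counts only $|\set{P}_{\text{crit}}|$ accesses, whereas $\interf$ sums over all of $\mathcal{A}_{\task}$. A non-critical access can stall too, so a naive sum gives $k(N-1)\memlat$, not $|\set{P}_{\text{crit}}|(N-1)\memlat$.

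My first attempt is to argue, via \cref{ass:det} and \cref{def:workload}, that the stall relevant to $\task$'s execution time is only the stall on the dependency chain determining WCET. Stalls on accesses off the critical chain would be overlapped or absorbed by slack and would not extend completion time. If this works, $\interf$ is read as the stall contributed along $\set{P}_{\text{crit}}$, and summing Step 2 over $\set{P}_{\text{crit}}$ gives the claimed bound. I would make this reinterpretation explicit, since otherwise the inequality fails as literally stated whenever non-critical accesses can also be delayed.

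\textbf{Step 4 (attainment).} Under \cref{rem:interpretation}, $C_{\text{base}}$ runs $N-1$ synchronized copies of $\task$ on private same-colour pages. Hence for each $\addr_i\in\set{P}_{\text{crit}}$, exactly $N-1$ congruent-different-tag requests arrive in phase with $\task$'s request. \cref{lem:pattern} then gives $\stall(\addr_i)=(N-1)\memlat$ for each critical access, and summing yields equality with the upper bound. Combining Steps 3 and 4 proves the theorem.
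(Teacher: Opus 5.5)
Your proposal follows the paper's proof essentially step for step: at most $N-1$ in-flight adversarial requests from the no-MSHR/blocking property, \cref{lem:temporal} for $\stall(\addr_i)\le(N-1)\memlat$, summation over $\set{P}_{\text{crit}}$, and attainment by $C_{\text{base}}$ via \cref{lem:pattern}. Your Step 3 concern is well founded: the paper makes exactly the restriction you describe, but without comment, by writing $\interf(C)=\sum_{\addr_i\in\set{P}_{\text{crit}}}\stall(\addr_i;C)$ even though \cref{def:interference} sums over all of $\mathcal{A}_{\task}$. However, the slack-absorption argument cannot be derived from this blocking, no-overlap model, so state the restriction as a definition rather than argue for it.
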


\begin{proof}
We proceed in two parts: an upper bound for arbitrary $C$, and attainment by $C_{\text{base}}$.

\textbf{Part 1: Upper bound.} Fix a critical access $\addr_i \in \set{P}_{\text{crit}}$. Under blocking miss handling and the no-MSHR assumption (\cref{def:hardware}), each adversarial core can hold at most one in-flight L2 request at any instant; hence at most $N-1$ adversarial requests can contend with $\task$'s request for $\addr_i$. By \cref{lem:temporal}, the stall on $\addr_i$ contributed by these contending requests is bounded by their count times $\memlat$:
\[
\stall(\addr_i; C) \;\leq\; (N-1)\,\memlat.
\]
Summing over $\set{P}_{\text{crit}}$:
\[
\interf(C) \;=\; \sum_{\addr_i \in \set{P}_{\text{crit}}} \stall(\addr_i; C) \;\leq\; |\set{P}_{\text{crit}}| \cdot (N-1)\,\memlat.
\]

\textbf{Part 2: Attainment by $C_{\text{base}}$.} Under $C_{\text{base}}$, for every $\addr_i \in \set{P}_{\text{crit}}$:
\begin{itemize}
    \item Each adversarial task accesses an address congruent with $\addr_i$ ($\setidx(\addr') = \setidx(\addr_i)$) with a distinct tag (from $\tagf(\addr_i)$ and from one another), per \cref{def:configuration} and \cref{rem:interpretation};
    \item All $N-1$ adversarial requests are issued synchronously with $\task$'s issue of $\addr_i$, per the synchronized-start clause of \cref{def:workload}.
\end{itemize}
The hypothesis of \cref{lem:pattern} is therefore satisfied. Applying \cref{lem:pattern}:
\[
\interf(C_{\text{base}}) = |\set{P}_{\text{crit}}| \cdot (N-1)\,\memlat,
\]
matching the upper bound from Part 1. \qed
\end{proof}

\begin{remark}[Tight but not unique]
\label{rem:non-unique}
The Baseline $C_{\text{base}}$ is one configuration that attains the bound, not the unique one. Any adversarial configuration that, for every $\addr_i \in \set{P}_{\text{crit}}$, issues $N-1$ congruent-different-tag requests synchronously with $\task$'s access to $\addr_i$ achieves the same total interference. In particular: permuting which physical core hosts which adversary, replacing one congruent-different-tag address by another, or substituting any synthetic adversarial trace satisfying the same per-access conditions, all yield the same value $|\set{P}_{\text{crit}}| \cdot (N-1)\,\memlat$. The contribution of this paper is the bound and one realizable construction (code replication with controlled physical placement; see \cref{sec:certification}), not a uniqueness claim.
\end{remark}

\section{Analytical Applicability Discussion}
\label{sec:applicability}

This section discusses the conditions under which the MIUB bound is applicable and how it integrates into certification workflows. No empirical measurements or simulation results are claimed; all reasoning is derived analytically from the stated invariants.

\subsection{Applicability Conditions}
The MIUB theorem holds if and only if all architectural invariants from \cref{sec:formalization} are satisfied:
\begin{itemize}
    \item \textbf{Direct-mapped L2}: Set-associative caches ($>1$ way) allow co-residence of multiple tasks' data, invalidating the binary eviction model.
    \item \textbf{Disabled MSHRs}: Enabling MSHRs permits parallel miss handling, breaking the strict serialization assumption.
    \item \textbf{Single-bank memory}: Multi-bank or interleaved architectures enable concurrent accesses, reducing contention.
    \item \textbf{Deterministic, pinned tasks}: Data-dependent branching, dynamic scheduling, or task migration introduce variability not captured by the model.
\end{itemize}

If any invariant is relaxed, the bound must be re-derived under the modified constraints. This is a feature, not a limitation: the methodology demonstrates how to formally close the interference search space for any fixed architecture.

\subsection{Integration with Certification Workflows}
For DO-178C/CAST-32A compliance, the MIUB bound can be used as follows:
\begin{enumerate}
    \item \textbf{Evidence artifact}: Include this analysis as an appendix to the Timing Analysis Report, explicitly linking the bound to documented hardware invariants.
    \item \textbf{Margin allocation}: Subtract the analytically derived interference bound from the total timing budget to obtain a clean WCET estimate for the application logic.
    \item \textbf{Change management}: If the hardware configuration changes (e.g., L2 associativity increased), re-apply the methodology to derive a new bound; the structure of the proof remains reusable.
    \item \textbf{Tool independence}: Unlike measurement-based approaches, this bound does not require proprietary tools or extensive empirical calibration, reducing long-term maintenance burden.
\end{enumerate}

\section{Limitations and Scope}
\label{sec:limitations}

The MIUB theorem is intentionally scoped to a well-defined architectural subclass. This focus enables a formally closed proof but requires explicit acknowledgment of boundaries:

\begin{itemize}
    \item \textbf{Scope is the contribution.} The result targets \emph{exactly} the architectural class defined in \cref{def:hardware}: direct-mapped L2, no MSHRs, single-bank memory, deterministic pinned tasks. Generalizations to set-associative caches, non-blocking miss handling, multi-bank memory, or non-deterministic schedules are explicitly out of scope and reserved for future work. Critiques that target the realism of these invariants for modern COTS multicore processors are addressing a different problem than the one formalized here.
    \item \textbf{Not a universal bound.} The result does not apply outside the stated invariants. It is intended as a precise lemma for a constrained design point relevant to safety-critical custom processors (e.g., RISC-V cores intentionally configured for timing predictability), not as a general theory of multicore interference.
    \item \textbf{Analytical, not empirical.} All derivations are analytical. Application to real hardware requires independent verification that the target platform satisfies the stated invariants. Empirical validation is a separate engineering activity, complementary to but distinct from the analytical bound established here.
    \item \textbf{Static contention only.} The bound addresses worst-case shared-resource contention under fixed scheduling. Dynamic interference sources outside the cache/bus/bank model---DMA, asynchronous interrupts, power management, TLB pressure, coherence traffic in the presence of data sharing---are out of scope and must be analyzed separately and added to the MIUB bound for a complete certification artifact.
    \item \textbf{Pessimistic arbitration.} The bound depends on \cref{ass:arb}: the arbiter serves $\task$ last whenever its request contends with adversarial requests. For platforms with a guaranteed non-pessimistic policy (e.g., fixed priority favoring $\task$'s core), the bound remains a sound upper bound but may overestimate observed stall.
\end{itemize}

These limitations are consistent with CAST-32A guidance, which permits bounding analyses conditioned on documented architectural assumptions.

\section{Certification Relevance (CAST-32A)}
\label{sec:certification}

CAST-32A Objective 5.2.2 requires demonstration of maximum credible interference for shared processor resources. The MIUB theorem provides:
\begin{itemize}
    \item A \emph{traceable} analytical bound, explicitly linked to hardware invariants;
    \item A \emph{structured} upper limit (relative to the assumed architectural class), avoiding arbitrary inflation of WCET budgets;
    \item A \emph{maintainable} artifact, with clear conditions for re-derivation if architecture changes;
    \item A \emph{tool-agnostic} methodology, reducing dependency on proprietary measurement infrastructure.
\end{itemize}

Practical realization of $C_{\text{base}}$ on a target platform consists of executing $N-1$ copies of $\task$'s binary on private physical memory pages chosen (e.g., via page coloring or linker-script-controlled placement) such that the cache-set indices of homologous accesses match while tags differ. This construction is platform-dependent and requires sufficient control over physical placement, cache-index bits, and synchronized task initiation; it is achievable on platforms exposing physical-to-cache-set mapping, as is typical of safety-critical processors with controllable MMU configuration, but should be validated against the target processor's reference manual and verified on the integrated configuration.

We recommend including this analysis as an appendix to the Timing Analysis Report, with cross-references in the System Safety Assessment and Certification Plan.

\section{Conclusion}
\label{sec:conclusion}

We have presented a formal bounding analysis proving that, under strict architectural invariants and a pessimistic arbitration model, the per-critical-access stall on a target task is bounded by $(N-1)\memlat$, and that this bound is attained by a synchronized adversarial workload of $N-1$ congruent-different-tag memory requests. The proof is direct and per-access, with the spatial, temporal, and pattern lemmas serving as the structural ingredients.

This work provides a methodological template for certification evidence requiring traceable, analytically justified interference bounds. Future work will extend the approach to set-associative caches, enabled MSHRs, and multi-bank memory architectures, and explore integration with pre-silicon verification flows for customizable ISAs.

\section*{Independence and Non-Affiliation Statement}
This work represents the author's independent technical research, conducted as founder of ExtenSilica Inc., a company focused on RISC-V extension methodologies. The author is also employed as a Product Development Engineer at Embraer S.A.; however, this paper was developed entirely outside the scope of that employment. No Embraer resources, proprietary information, internal methodologies, materials, ideas, tools, data, processes, or domain expertise were used in the development of this work, and no part of the analysis derives from activities performed in the course of employment at Embraer. The architectural assumptions and certification discussions are generic and applicable to any direct-mapped multicore architecture meeting the stated invariants.

\section*{Acknowledgments}
The author thanks independent researchers and open-source contributors in the real-time systems community for foundational discussions on WCET bounding and multicore interference analysis.

\bibliographystyle{plain}

\end{document}